\newcommand{\bu}{\boldsymbol{u}}
\newtheorem{thm}{Theorem}
\newtheorem{pro}{Proposition}
\date{January 17, 2019}
\begin{document}

\title{\bf Integrable (3+1)-dimensional system\\ with an algebraic Lax pair}
\author{A. Sergyeyev\\[2mm] Mathematical Institute, Silesian University in Opava,\\ Na
Rybn\'\i{}\v{c}ku 1, 746 01 Opava,~Czech Republic\\[2mm]
E-mail: \texttt{Artur.Sergyeyev@math.slu.cz}}

\maketitle
\begin{abstract}
\protect\vspace*{-5mm}
  We present a first example of an integrable (3+1)-dimensional dispersionless system with nonisospectral Lax pair involving algebraic, rather than rational, dependence on the spectral parameter, thus showing that the class of integrable (3+1)-dimensional dispersionless systems with nonisospectral Lax pairs is significantly more diverse than it appeared before. The Lax pair in question is of the type
recently introduced in [A. Sergyeyev, Lett. Math. Phys. 108 (2018), no. 2, 359-376, arXiv:1401.2122].\looseness=-1
\end{abstract}



\section*{Introduction}

Integrable systems are well known to play an important role in modern mathematics and physics, cf.\ e.g.\ \cite{as,ca,df,d,dfk,fer,hkbp,kod,kvv,mw,o,as-ro,scg,se,vps,w,yan,z}. According to Einstein's general relativity our spacetime is four-dimensional, so the search for (3+1)-dimensional integrable systems, that is, integrable partial differential systems in four independent variables, is quite naturally of particular significance, see e.g.\ \cite{d,mw,scg,w}; also cf.\ e.g.\ \cite{t}
on pecularities of 4-manifolds.
As discussed e.g.\ in \cite{scg}, most of integrable (3+1)-dimensional systems known to date are dispersionless, which, roughly speaking, means that they can be written as first-order quasilinear homogeneous systems.
\looseness=-1

While a fairly large number of (3+1)-dimensional integrable systems is presently known, see e.g.\ \cite{mw,as-ro,scg} and references therein, the breadth of their class yet remains to be fully understood. In 2+1 dimensions there are many examples of dispersionless integrable systems whose
nonisospectral  Lax pairs involve highly sophisticated dependence on the spectral parameter, including, for example, Weierstrass $\wp$-functions, cf.\ e.g.\ \cite{fer}, but, to the best of our knowledge, for all previously known examples of (3+1)-dimensional integrable dispersionless systems in finitely many dependent variables with nonisospectral Lax pairs the latter are  polynomial or rational in the spectral parameter, which begets the question of whether more sophisticated nonisospectral Lax pairs could exist in 3+1 dimensions.\looseness=-1

We answer this question in the affirmative by presenting in Section~\ref{nis} below system (\ref{sys}) which is, as far as the present author is aware, the first example of an integrable (3+1)-dimensional dispersionless system in finitely many dependent variables with a nonisospectral Lax pair being  algebraic, rather than merely rational, in the spectral parameter, which shows that nonisospectral dispersionless Lax pairs in 3+1 dimensions are significantly more diverse than it appeared before. The example in question is found within the framework of a new systematic construction for (3+1)-dimensional integrable systems related to contact geometry, see  \cite{scg}, the follow-up papers \cite{bls, snd}, and a brief review in Section~\ref{pre} below.
\looseness=-1
\section{Preliminaries}\label{pre}

Recall that a {\em dispersionless} or {\em hydrodynamic-type} system in four independent variables $x,y,z,t$ is, cf.\ e.g.\ \cite{dfk,fer,scg, se} and references therein, a system that can be written in the form
\begin{equation}\label{sys-gen}
A_0(\bu)\bu_t+A_1(\bu)\bu_x+A_2(\bu)\bu_y+A_3(\bu)\bu_z=0,
\end{equation}
where $\boldsymbol{u}=(u_1,\dots,u_N)^{\mathrm T}$ is an $N$-component vector of unknown functions of $x,y,z,t$,
$A_i$ are $M\times N$-matrix-valued functions of $\bu$, $M$ and $N$ are nonzero nonnegative integers such that $M\geqslant N$, and the superscript ${\mathrm T}$ indicates the transposed matrix.

Here and below all functions are assumed sufficiently smooth for all computations to make sense. This can be readily formalized using the language of differential algebra, cf.\ e.g.\ \cite{dsk,scg}, and references therein.\looseness=-1

There exist \cite{scg} infinitely many {\bf integrable} dispersionless
systems of the general form (\ref{sys-gen}) that
admit, for suitable
Lax
functions $f=f(p,\bu)$ and $g=g(p,\bu)$, 
nonisospectral Lax pairs
of the form introduced in \cite{scg} and intimately related to contact geometry,\looseness=-1
\begin{equation}\label{clp-gen}
\chi_y=X_f(\chi),\quad \chi_t=X_g(\chi),
\end{equation}
where $\chi=\chi(x,y,z,t,p)$, and $p$ is the so-called variable spectral parameter (note that $\bu_p\equiv 0$).

For any $h=h(p,\bu)$ the operator $X_h$ is defined as
\begin{equation}\label{cvf}
X_h=h_p\partial_x+(ph_z-h_x)\partial_p+(h-p h_p)\partial_z
\end{equation}
and formally looks exactly like the contact vector field with a contact hamiltonian $h$ on a contact 3-manifold with local coordinates $x,z,p$ and contact one-form $dz+pdx$; cf.\ e.g.\ \cite{bl, br}
and references therein for more details on contact geometry.

In particular, for any natural $m$ and $n$ the pairs of Lax functions
\[
f=p^{n+1}+\sum\limits_{j=0}^n v_j p^j,\quad g=p^{m+1}+\frac{m}{n}v_{n} p^m+\sum\limits_{k=0}^{m-1}w_k p^k
\]
and
\[
f=\sum\limits_{j=1}^m a_i/(v_i-p),\quad g=\sum\limits_{k=1}^n b_k/(w_k-p)
\]
yield \cite{scg, snd} (3+1)-dimensional integrable systems of the general form (\ref{sys-gen}) with $M=N$ for $\bu=(v_0,\dots,v_n,w_0,\dots,\allowbreak w_{m-1})^\mathrm{T}$ and $\bu=(a_1,\dots,a_m, b_1,\dots,b_n, v_1,\dots,v_m,w_1,\dots,w_n)^\mathrm{T}$ respectively.

Notice \cite{scg} that if $\bu_z=0$ and $\chi_z=0$ then the Lax pairs (\ref{clp-gen}) boil down to well-known (2+1)-dimensional Lax pairs involving Hamiltonian (rather than contact) vector fields, cf.\ e.g.\ \cite{fer, z} and references therein for the associated (2+1)-dimensional integrable systems.

\begin{pro}[\cite{scg}]\label{pro} A system (\ref{sys-gen}) admits a linear Lax pair of the form (\ref{clp-gen}) if and only if it admits a nonlinear Lax pair for $\psi=\psi(x,y,z,t)$ of the form
\begin{equation}\label{nlp-gen}
\psi_y=\psi_z f(\psi_x/\psi_z,\bu),\quad \psi_t=\psi_z g(\psi_x/\psi_z,\bu)
\end{equation}
with the same functions $f$ and $g$ as in (\ref{clp-gen}). 
\end{pro}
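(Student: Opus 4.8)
The plan is to establish the equivalence by showing that each Lax pair encodes the same compatibility condition, namely the original dispersionless system \eqref{sys-gen}. The key observation is that the linear Lax pair \eqref{clp-gen} consists of first-order linear PDEs for the scalar function $\chi$, whereas the nonlinear pair \eqref{nlp-gen} consists of first-order nonlinear PDEs for $\psi$. I would connect the two by treating $\chi$ as defining a level-set function whose gradient determines the characteristic directions, so that $\psi$ arises as a function constant along the flow of the vector fields $X_f$ and $X_g$, or conversely by reading off $p$ as the ratio $\psi_x/\psi_z$.

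First I would make precise the correspondence between the two descriptions. Starting from a solution $\chi(x,y,z,t,p)$ of \eqref{clp-gen}, I would introduce $p$ implicitly through the relation $\chi(x,y,z,t,p)=\mathrm{const}$, thereby regarding $p$ as a function of the base variables $x,y,z,t$. The natural candidate is to set $\psi$ so that $p=\psi_x/\psi_z$; concretely, one uses the contact structure encoded in the one-form $dz+p\,dx$ to identify $p$ with the slope of the level surfaces of $\psi$. I would then differentiate the implicit relation $\chi=\mathrm{const}$ with respect to $y$ and $t$, substitute the explicit form \eqref{cvf} of $X_f$ and $X_g$, and collect terms. The structure of $X_h$ in \eqref{cvf}, with its coefficients $h_p$, $ph_z-h_x$, and $h-ph_p$, is designed precisely so that the $\partial_p$-component governs the evolution of $p=\psi_x/\psi_z$ while the remaining components reproduce $\psi_y=\psi_z f$ and $\psi_t=\psi_z g$.

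The core of the argument is then purely computational bookkeeping: I would verify that the compatibility $\chi_{yt}=\chi_{ty}$ for the linear system is equivalent to the compatibility $\psi_{yt}=\psi_{ty}$ for the nonlinear system, and that both reduce to the same constraint on $\bu$, i.e.\ to \eqref{sys-gen}. For the linear pair this compatibility is the vanishing of the commutator, expressible as $X_f$ and $X_g$ closing on the function $\chi$ modulo the equations of motion; for the nonlinear pair it is obtained by cross-differentiating $\psi_y=\psi_z f$ and $\psi_t=\psi_z g$ and using the chain rule on the composite arguments $f(\psi_x/\psi_z,\bu)$ and $g(\psi_x/\psi_z,\bu)$. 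Since $f$ and $g$ are the same functions in both \eqref{clp-gen} and \eqref{nlp-gen}, the resulting conditions on $f_p$, $f_x$, $g_p$, $g_x$ and the $\bu$-derivatives must match term by term.

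The main obstacle I anticipate is the translation between the two pictures in a way that is genuinely reversible, so that the ``if and only if'' holds rather than just one implication. Going from the linear to the nonlinear pair requires solving the implicit relation for $p$, which is only locally valid and demands a nondegeneracy assumption (essentially $\chi_p\neq 0$); going back requires reconstructing $\chi$ from $\psi$, which involves integrating along characteristics. I would handle this by working locally and invoking the method of characteristics: the characteristic curves of the linear equations \eqref{clp-gen} are exactly the level sets along which $\psi$ is adapted, so the two formulations carry identical information about the underlying flow. The cleanest way to close the loop is to observe that both Lax pairs are equivalent to the single geometric statement that the pair of vector fields $X_f$ and $X_g$ spans an integrable distribution whose integrability is \eqref{sys-gen}, making the scalar functions $\chi$ and $\psi$ merely two coordinate descriptions of the same foliation.
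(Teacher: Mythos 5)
Your overall direction---linking the two pairs through $p=\psi_x/\psi_z$ and showing that both compatibility conditions collapse to one and the same scalar constraint---is the right one, and it is essentially how the cited source \cite{scg} argues (note that the present paper does not reprove Proposition~\ref{pro} at all: it imports it from \cite{scg}, and the only ingredient it uses later is that compatibility of \eqref{clp-gen} is equivalent to \eqref{zcr}). The computation you defer to ``bookkeeping'' is the actual content: cross-differentiating \eqref{nlp-gen} and using the chain rule on $p=\psi_x/\psi_z$, all second derivatives of $\psi$ cancel and one is left with the identity $(\psi_y)_t-(\psi_t)_y=\psi_z\bigl(f_t-g_y+\{f,g\}\bigr)\bigr|_{p=\psi_x/\psi_z}$ with the contact bracket \eqref{cbr}; on the linear side, $(\chi_y)_t-(\chi_t)_y=X_{f_t-g_y+\{f,g\}}(\chi)$ because $[X_f,X_g]=X_{\{f,g\}}$, and since $h\mapsto X_h$ in \eqref{cvf} is injective (the $\partial_x$-coefficient gives $h_p$, the $\partial_z$-coefficient then recovers $h$), compatibility of \eqref{clp-gen} is exactly \eqref{zcr} holding identically in $p$. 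Two points you gloss over are essential here. First, the nonlinear pair yields \eqref{zcr} only evaluated at the single value $p=\psi_x/\psi_z$; to get \eqref{zcr} as an identity in $p$ one must use that ``admits'' means compatibility modulo \eqref{sys-gen} for arbitrary (jets of) $\psi$, so that $\psi_x/\psi_z$ sweeps all values of $p$---this genericity step, not a solution-level dictionary, is what makes the statement an ``if and only if''. Second, both compatibilities reduce to \eqref{zcr} holding \emph{by virtue of} \eqref{sys-gen}, not ``to \eqref{sys-gen}'' as you write: for the general systems of the proposition these need not coincide.

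Beyond these omissions, your proposed way of ``closing the loop'' is wrong as stated: Frobenius integrability of the distribution spanned by $X_f$ and $X_g$ is strictly weaker than Lax compatibility. What is needed is commutativity of the extended fields $\partial_y-X_f$ and $\partial_t-X_g$, i.e., $X_{f_t-g_y+\{f,g\}}=0$; involutivity $[X_f,X_g]\in\mathrm{span}\{X_f,X_g\}$ would permit a nonzero functional combination and loses equation \eqref{zcr} entirely, so an argument resting on ``two coordinate descriptions of the same foliation'' cannot deliver the proposition. Likewise, your reconstruction of $\psi$ from a level set of $\chi$ is incomplete and risks circularity: after defining $w$ by $\chi(x,y,z,t,w)=\mathrm{const}$ (with $\chi_p\neq 0$), the function $\psi$ must solve the overdetermined system $\psi_x=w\psi_z$, $\psi_y=\psi_z f$, $\psi_t=\psi_z g$, whose local solvability is itself a compatibility statement that has to be verified rather than absorbed into ``integrating along characteristics''. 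Fortunately, no solution-to-solution correspondence is required for the proposition as stated: the bracket identity above, plus the genericity of $\psi_x/\psi_z$, already gives both implications.
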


In closing note that systems of the form (\ref{nlp-gen}) belong to a broader class of 
multitime Hamilton--Jacobi systems, cf.\ e.g.\ \cite{lr} for more details on the latter.
\section{New integrable system with algebraic Lax pair}\label{nis}
Consider the following (3+1)-dimensional evolutionary system for 
$\bu=(a,b,r,s,u,v,w)^{\mathrm T}$:
\begin{equation}\label{sys}
\hspace*{-7mm}
\begin{array}{rcl}
a_t&=&\displaystyle\frac{1}{r^2-2 r s a + 2 s^2 b}\biggl((4 w (r a - s b) - v r) a_x + r a_y\\[4mm]
&& + \bigl( 2 w \left(2 a(r a - s b) - r b\right) - u r\bigr) a_z\\[2mm]
&&+ (v s - 2 w r) b_x - s b_y + (2 w (s b -r a) + u s) b_z\\[2mm]
&&+(r-s a) u_x + (r a - 2 s b) u_z + (2 s b-r a) v_x\\[2mm]
&&+ 2(a (s b- ra) + r b) v_z\\[2mm]
&& + 2(a (a r- s b)-rb) w_x + 2\left(2 a^2 (a r - s b) - 3 r a b + 2 s b^2\right) w_z\biggr),\\[5mm]
b_t&=&\displaystyle\frac{1}{r^2-2 r s a + 2 s^2 b}\biggl(2 (2 w r-v s) b a_x + 2 s b a_y\\[4mm]
&&+ 2 (2 w (r a -  s b) - u s) b a_z\\[2mm]
&& + (2 s(v a - 2 w b) - v r) b_x + (r-2 s a) b_y\\[2mm]
&& + (2 (u s a - w r b) - u r) b_z\\[2mm]
&&+(2 s (b-a^2) + r a) u_x + 2 (r-s a) b u_z\\[2mm]
&& - 2 (r-s a) b v_x - 2 (r a - 2 s b) b v_z\\[2mm]
&& + 2 (r a - 2 s b) b w_x + 4 (a (r a - s b) - r b) b w_z
\biggr),\\[5mm]
r_t&=&\displaystyle\frac{1}{r^2-2 r s a + 2 s^2 b}\biggl((v s - 2 w r) r a_x - r s a_y\\[4mm]
&&- (2 w (r a - s b) - u s) r a_z\\[2mm]
&&+ (2 w r-v s) s b_x + s^2 b_y + (w r^2-u s^2) b_z\\[2mm]
&&+(s a - r) s u_x + (2 s b-r a) s u_z + (r-s a) r v_x + (r a - 2 s b) r v_z\\[2mm]
&&+ (2 s b-r a) r w_x - 2 ( a(r a - s b) - r b) r w_z\biggr),\\[5mm]
s_t&=&w_x + a w_z + w a_z,\\
u_t&=&a r_x + 2 b r_z - s b_x,\\
v_t&=&r_x + a r_z + a s_x + 2 b s_z - s a_x + s b_z,\\
w_t&=&s_x + a s_z + s a_z.
\end{array}
\end{equation}

\begin{thm}\label{pr1}
The (3+1)-dimensional seven-component evolutionary system (\ref{sys})
is integrable since it admits a
Lax pair (\ref{clp-gen}) with
algebraic Lax functions $f$ and $g$ given by
\begin{equation}\label{fgp}
\begin{array}{rcl}
f&=&u+v p+w p^2+(r+s p) \sqrt{p^2+2 a p+2 b},\\[2mm]
g&=&\sqrt{p^2+2 a p+2 b},
\end{array}
\end{equation}
that is,
\begin{equation}\label{clp}
\begin{array}{rcl}
\chi_y&=&\displaystyle\frac{1}{g}\biggl(\bigl(2 s p^2+(r+3 s a+2 w g)p+r a+v g+2 s b\bigr)\chi_x\\[5mm]
&&+\bigl(-s p^3-(w g+s a) p^2+p r a+2 r b+u g\bigr)\chi_z\\[2mm]
&&+\bigl(s_z p^4+(2 a s_z+s a_z+r_z+g w_z-s_x)p^3\\[2mm]
&&+((v_z-w_x)g+2 b s_z+r a_z+s b_z-2 a s_x-s a_x-r_x+2 a r_z)p^2\\[2mm]
&&+((u_z-v_x)g+r b_z-r a_x-s b_x-2 b s_x-2 r_x a+2 b r_z)p\\[2mm]
&&-r b_x-g u_x-2 b r_x\bigr)\chi_p\biggr),\\[5mm]
\chi_t&=&
\displaystyle\frac{1}{g}\biggl((p+a)\chi_x+(a p+ 2b)\chi_z+(a_z p^2+p (b_z-a_x)-b_x)\chi_p\biggr),
\end{array}
\end{equation}
and a nonlinear Lax pair of the form (\ref{nlp-gen}) with $f$ and $g$ given by (\ref{fgp}), that is,
\begin{equation}\label{nlp}
\hspace*{-3mm}
\begin{array}{rcl}
\psi_y&=&u\psi_z +v\psi_x+w\psi_x^2/\psi_z
+(r\psi_z+s\psi_x) \sqrt{(\psi_x/\psi_z)^2+2 a\psi_x/\psi_z+2 b},\\[3mm]
\psi_t&=&\psi_z\sqrt{(\psi_x/\psi_z)^2+2 a\psi_x/\psi_z+2 b};
\end{array}\hspace{-2mm}
\end{equation}
both of the Lax pairs (\ref{clp}) and (\ref{nlp}) are expressed in terms of algebraic, rather than rational, functions.
\end{thm}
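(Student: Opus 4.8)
The plan is to recover system (\ref{sys}) as the compatibility condition of the linear Lax pair (\ref{clp-gen}) for the specific algebraic $f,g$ of (\ref{fgp}), and then to pass to the nonlinear pair via Proposition~\ref{pro}. First I would set up the general reduction: differentiating $\chi_y=X_f(\chi)$ in $t$ and $\chi_t=X_g(\chi)$ in $y$, and using that each coefficient of $X_h$ in (\ref{cvf}) is linear in $h$ while $\bu_p\equiv0$, one checks $\partial_t X_f=X_{\partial_t f}$ and $\partial_y X_g=X_{\partial_y g}$, where $\partial_t f=\sum_i f_{u_i}(u_i)_t$ and likewise for $g$. Together with the contact identity $[X_f,X_g]=X_{\{f,g\}}$ for the bracket underlying (\ref{cvf}), see \cite{scg}, cross-differentiation gives $\chi_{yt}-\chi_{ty}=X_{\partial_t f-\partial_y g+\{f,g\}}(\chi)$. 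Since $X_h=0$ forces $h_p=0$ and $h-ph_p=h=0$, the map $h\mapsto X_h$ is injective, so the Lax pair is compatible if and only if the single scalar relation $\partial_t f-\partial_y g+\{f,g\}=0$ holds identically in $p$.

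The computational core is to expand this relation for $f,g$ from (\ref{fgp}). I would treat $g=\sqrt{p^2+2ap+2b}$ as an auxiliary quantity subject to $g^2=p^2+2ap+2b$, whence $g_p=(p+a)/g$, $g_x=(pa_x+b_x)/g$, $g_z=(pa_z+b_z)/g$, $\partial_t g=(pa_t+b_t)/g$, and $\partial_y g=(pa_y+b_y)/g$; obtaining the explicit coefficients in (\ref{clp}) is then the same elementary substitution into (\ref{cvf}). Multiplying $\partial_t f-\partial_y g+\{f,g\}$ by $g$ and repeatedly replacing $g^2$ by $p^2+2ap+2b$ reduces it to the form $P(p)+Q(p)\,g$, with $P,Q$ polynomials in $p$ whose coefficients are linear in the first-order derivatives of $\bu$. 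Because $p^2+2ap+2b$ is not a perfect square for generic $a,b$, the quantity $g$ is not rational in $p$, so $P+Qg=0$ splits into the two independent identities $P\equiv0$ and $Q\equiv0$.

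Equating coefficients of the powers of $p$ then produces a linear algebraic system for the seven derivatives $a_t,\dots,w_t$, which I expect to reproduce (\ref{sys}) as follows. The terms $u_t+v_tp+w_tp^2$ lie in the algebraic part $Q$ and are uncoupled, giving at once the denominator-free equations for $w_t,v_t,u_t$; the term $(r_t+s_tp)g^2=(r_t+s_tp)(p^2+2ap+2b)$ together with $(r+sp)\,\partial_t g$ places $r_t,s_t,a_t,b_t$ in the rational part $P$. Its top coefficient fixes $s_t$ (the fourth equation of (\ref{sys})) and the next determines $r_t$ in terms of $a_t$, after which the two lowest coefficients form a coupled system of the shape $(r-2as)a_t+s\,b_t=\cdots$, $-2bs\,a_t+r\,b_t=\cdots$ with determinant $r^2-2rsa+2s^2b$; Cramer's rule then yields precisely the denominator in the first three equations of (\ref{sys}). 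Finally, Proposition~\ref{pro} converts (\ref{clp-gen}) into the nonlinear pair (\ref{nlp-gen}), which for (\ref{fgp}) is exactly (\ref{nlp}), and since $\sqrt{p^2+2ap+2b}$ is genuinely algebraic rather than rational in $p$, so are both Lax pairs.

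The main obstacle is the square-root bookkeeping in the middle step: one must verify that, after the reduction, the \emph{rational} part $P$ and the \emph{algebraic} part $Q$ can be annihilated simultaneously by a single consistent choice of $\bu_t$, i.e.\ that collecting all powers of $p$ leaves an exactly determined, rather than overdetermined, system, with no residual constraints coming from the higher powers of $p$ generated by the contact bracket. That the algebraic dependence closes up in this way is the nontrivial structural fact, and in practice I would confirm the coefficient-by-coefficient cancellations with computer algebra once the explicit form (\ref{clp}) is in hand.
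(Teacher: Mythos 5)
Your proposal is correct and follows essentially the same route as the paper: reduce the nonlinear pair (\ref{nlp}) to the linear one (\ref{clp}) via Proposition~\ref{pro}, reduce compatibility of (\ref{clp}) to the scalar zero-curvature condition $f_t-g_y+\{f,g\}=0$ with the contact bracket (\ref{cbr}) (which the paper cites as Proposition~1 of \cite{scg} and you re-derive via $[X_f,X_g]=X_{\{f,g\}}$ and injectivity of $h\mapsto X_h$), and then verify this identity by direct, computer-assisted substitution of (\ref{fgp}). Your additional structural analysis --- splitting the identity into its rational part $P$ and algebraic part $Qg$, and solving for $a_t,b_t$ by Cramer's rule with determinant $r^2-2rsa+2s^2b$, which explains the denominator in (\ref{sys}) --- is a sound elaboration of what the paper leaves implicit in ``substituting (\ref{fgp}) into (\ref{zcr}) we readily see,'' the only minor imprecision being that clearing the $1/g$ terms requires multiplying through by $p^2+2ap+2b$ as well as by $g$.
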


\begin{proof}
By definition, proving that (\ref{sys}) admits (\ref{clp}) and (\ref{nlp}) amounts to proving that both (\ref{clp}) and (\ref{nlp}) are compatible by virtue of (\ref{sys}). 

First of all, by Proposition~\ref{pro} compatibility of (\ref{clp}) by virtue of (\ref{sys}) implies the same for (\ref{nlp}).

Next, by Proposition~1 of \cite{scg}, in order to prove that (\ref{clp}) is compatible by virtue of (\ref{sys}), it suffices to show that for $f$ and $g$ given by (\ref{fgp}) the system (\ref{sys}) implies
the equation
\begin{equation}\label{zcr}
{f_t-g_y+\{f,g\}}=0
\end{equation}
where the contact bracket $\{,\}$ is defined \cite{scg} as
\begin{equation}\label{cbr}
\{f,g\}=f_p g_x-f_x g_p-p(f_p g_z -g_p f_z)+f g_z- g f_z.
\end{equation}

Substituting (\ref{fgp}) into (\ref{zcr}) we readily see that this is indeed the case, and the result follows.
\end{proof}

\section{Conclusions}

We have presented above a new integrable (3+1)-dimensional dispersionless evolutionary system (\ref{sys}) whose linear Lax pair (\ref{clp}) and nonlinear Lax pair (\ref{nlp}) are expressed in terms of algebraic rather than rational functions.

To the best of our knowledge, this is the first example of an integrable (3+1)-dimensional dispersionless system in finitely many dependent variables admitting a nonisospectral Lax pair which is {\em not} rational in the variable spectral parameter. Indeed, as far as the present author is able to tell, for all previously known examples of integrable dispersionless (3+1)-dimensional systems in finitely many dependent variables with nonisospectral Lax pairs, including e.g.\ the equations for (anti-)self-dual four-dimensional conformal structures \cite{dfk}, the Dunajski equation, and systems (15), (17), (38), and (40) from \cite{scg}, 
their Lax pairs
have at most rational dependence on the variable spectral parameter.\looseness=-1 

We conjecture that (\ref{sys}) has no nontrivial linear or nonlinear Lax pairs written in terms of rational (rather than algebraic) functions.

In closing note that it could be of interest to study symmetries, conservation laws, Hamiltonian operators, and other related structures for (\ref{sys}) in spirit of \cite{kvv,o}.

\section*{Acknowledgments}
This research was supported in part by the Ministry of Education,
Youth and Sports of the Czech Republic (M\v{S}MT \v{C}R) under RVO
funding for I\v{C}47813059, and by the Grant Agency of the Czech
Republic (GA \v{C}R) under grant P201/12/G028. The
computations in the paper were mostly performed using the computer algebra package {\em
Jets} \cite{jets} for Maple\circledR.\looseness=-1

I am pleased to thank E. Ferapontov, I.L. Freire, B. Kruglikov, H.V. L\^e, R.O. Popovych, and R. Vitolo for stimulating discussions,  C. Taubes for sending me a copy of \cite{t}, and the anonymous referees for useful suggestions.\looseness=-1

\protect\vspace{-4mm}

\end{document}